\DeclareMathOperator{\spread}{spread}
\theoremstyle{plain}
\newtheorem{theorem}{Theorem}
\newtheorem*{theorem*}{Theorem}
\newtheorem{proposition}[theorem]{Proposition}
\newtheorem*{proposition*}{Proposition}
\newtheorem{corollary}[theorem]{Corollary}
\newtheorem*{corollary*}{Corollary}
\newtheorem{lemma}[theorem]{Lemma}
\newtheorem*{lemma*}{Lemma}
\newtheorem*{observation*}{Observation}
\newtheorem*{conjecture*}{Conjecture}
\newtheorem*{question*}{Question}
\newtheorem*{questions*}{Questions}
\newtheorem*{problem*}{Problem}
\newtheorem*{problems*}{Problems}
\newtheorem*{openproblem*}{Open Problem}
\newtheorem{definition}[theorem]{Definition}
\newtheorem*{definition*}{Definition}
\newtheorem*{example*}{Example}
\newtheorem*{exercise*}{Exercise}
\newtheorem{remark}[theorem]{Remark}
\newtheorem*{remark*}{Remark}
\newtheorem*{remarks*}{Remarks}
\newtheorem*{claim*}{Claim}
\newcommand{\subclass}[1]{}
\newcommand{\enumTi}[1]{\renewcommand{\theenumi}{#1}}
\newcommand{\alphenumi}{\enumTi{\alph{enumi}}}
\newcommand{\romenumi}{\enumTi{\roman{enumi}}}
\newlength{\hspaceforlengthglumpf}
\newcommand{\comment}[1]{\text{\footnotesize[#1]}}
\DeclareMathOperator{\id}{id}
\DeclareMathOperator{\tr}{tr}
\newcommand{\lt}{\left}
\newcommand{\rt}{\right}
\newcommand{\nfrac}[2]{{\nicefrac{#1}{#2}}}
\newcommand{\CC}{\mathbb{C}}
\newcommand{\RR}{\mathbb{R}}
\newcommand{\ZZ}{\mathbb{Z}}
\newcommand{\eps}{\varepsilon}
\newcommand{\ketbra}[2]{\lvert{#1}\rangle\langle{#2}\rvert}
\newlength{\algotabbingwidth}
\renewcommand{\emph}[1]{\textsl{#1}}
\def\firstAuthorLast{Sample {et~al.}} 
\def\Authors{Francisco Javier Gil Vidal\,$^{1}$
 Dirk Oliver Jim Theis\,$^{1,2,*}$}
\begin{document}
\onecolumn

\title{Input Redundancy for Parameterized Quantum Circuits}

\author[\firstAuthorLast ]{\Authors} 
\address{} 
\maketitle

\begin{abstract}
  One proposal to utilize near-term quantum computers for machine learning are Parameterized
  Quantum Circuits (PQCs).  There, input is encoded in a quantum state, parameter-dependent unitary
  evolution is applied, and ultimately an observable is measured.  In a hybrid-variational
  fashion, the parameters are trained so that the function assigning inputs to expectation values
  matches a target function.

  The \textit{no-cloning principle} of quantum mechanics suggests that there is an advantage in
  redundantly encoding the input several times.  In this paper, we prove lower bounds on the
  number of redundant copies that are necessary for the expectation value function of a
  PQC to match a given target function.

  We draw conclusions for the architecture design of PQCs.

  \par\noindent
  \textbf{Keywords:} Parameterized Quantum Circuits, Quantum Neural Networks, near-term
  quantum computing; lower bounds.
\end{abstract}

\section{Introduction}
Quantum Information Processing proposes to exploit quantum physical phenomena for the purpose of
data processing.   Conceived in the early 80's \cite{feynman1982simulating, manin1980vychislimoe},
recent breakthroughs in building controllable quantum mechanical systems have led to an explosion
of activity in the field.

Building quantum computers is a formidable challenge --- but so is designing algorithms which,
when implemented on them, are able to exploit the advantage that quantum computing is widely
believed by experts to have over classical computing on some computational tasks.   A particularly
compelling endeavor is to make use of \textit{near-term quantum computers}, which suffer from
limited size and the presence of debilitating levels of quantum noise.   The field of algorithm
design for Noisy Intermediate-Scale Quantum (NISQ) computers has scrambled over the last few years
to identify fields of computing, paradigms of employing quantum information processing, and
commercial use-cases in order to profit from recent progress in building programmable quantum
mechanical devices --- limited as they may be at present~\cite{mohseni2017commercialize}.

One use-case area where quantum advantage might materialize in the near term is that of Artificial
Intelligence~\cite{mohseni2017commercialize,perdomo2018opportunities}.  The hope is best reasoned
for generative tasks: several families of probability distributions have been theoretically proven
to admit quantum algorithms for efficiently sampling from them, while no classical algorithm is
able or is known to be able to perform that sampling task.  Boson sampling is probably the most
widely known of these sampling tasks, even though the advantage does not seem to persist in the
presence of noise (cf.~\cite{neville2017classical}); examples of some other sampling procedures
can be found in references \cite{bremner2016average} and~\cite{farhi2016quantum}.

Promising developments have also been made available in the case of quantum circuits that can be
iteratively altered by manipulation of one or several parameters: Du et
al.~\cite{du2018expressive} consider so-called \textit{Parameterized Quantum Circuits (PQCs)} and
find that they, too, yield a theoretical advantage for generative tasks.  PQCs are occasionally
referred to as \textit{Quantum Neural Networks (QNNs)} (e.g., in~\cite{farhi1802classification})
when aspects of non-linearity are emphasized, or as \textit{Variational Quantum
  Circuits}~\cite{mcclean2016theory}.  We stick to the term PQC in this paper, without having in
mind excluding QNNs or VQCs.

The PQC architectures which have been considered share some common characteristics, but an
important design question is how the input data is presented.  Input data refers either to a
feature vector, or to output of another layer of a larger, potentially hybrid quantum-classical
neural network.  The fundamental choice is whether to encode digitally or in the amplitudes of a
quantum state.  Digital encoding usually entails preparing a quantum register in states
$\ket{b^x}$, where $b^x\in\{0,1\}^n$ is binary encoding of input datum~$x$.  Encoding in the
amplitudes of a quantum state, on the other hand, refers to preparing a an $n$-bit quantum
register in a state of the form $\ket{\phi^x} := \sum_{j=0}^{2^n-1} \phi_j(x) \ket{j}$, where
$\phi_j$, $j=0,\dots,2^n-1$ is a family of encoding functions which must ensure that
$\ket{\phi^x}$ is a quantum state for each~$x$, i.e., that $\sum_j \abs{\phi_j(x)}^2=1$ holds for
all~$x$.  We refer the reader to the discussion of these concepts in~\cite{schuld2018supervised}
for further details.

The present paper deals with redundancy in the input data, i.e., giving the same datum several
times.   The most straightforward concept here is that of ``tensorial''
encoding~\cite{Schuld-Bocharov-Svore-Wiebe:circcentric:2018}.   Here, several quantum registers are
prepared in a state which is the tensor product of the corresponding number of \emph{identical}
copies of a data-encoding state, i.e., $\ket{\phi^x}\otimes\dots\otimes\ket{\phi^x}$.
For example, Mitarai et al.~\cite{Mitarai-Negoro-Kitagawa-Fujii:q-circ-learn:2018}, propose the
following construction: To encode a real number $x$ close to~$0$, they choose the state
\begin{equation}\label{eq:minkif-input}
  \ket{\phi^x}
  = R_y(\arcsin(x)/2) \ket{0}
  =
  \sin(\arcsin(x)/2) \ket{0} + \cos(\arcsin(x)/2) \ket{1},
\end{equation}
where $R_y(\theta) := e^{-i\theta \sigma_Y/2}$ is the 1-qubit Pauli rotation around the $Y$-axis
(and $\sigma_Y$ the Pauli matrix).   But then, to construct a PQC that is able to learn polynomials
of degree~$n$ in a single variable, they encode the polynomial variable~$x$ into~$n$ identical
copies, $\bigotimes_{j=1}^n \ket{\phi^x}$.   It is noteworthy, and the starting point of our
research, that the number of times that the input, $x$, is encoded redundantly, depends on the
complexity of the learning task.

Encoding the input several times redundantly, as in tensorial encoding, is probably motivated by
the quantum no-cloning principle.   While classical circuits and classical neural networks can have
\textit{fan-out} --- the output of one processing node (gate, neuron, \dots) can be the input to
several others --- the no-cloning principle of quantum mechanics forbids to duplicate data which
is encoded in the amplitudes of a quantum state.   This applies to PQCs, and, specifically, to the
input that is fed into a PQC, if the input is encoded in the amplitudes of input states.

\subsection{The research presented in this paper.}
The no-cloning principle suggests that duplicating input data redundantly is unavoidable.  The
research presented in this paper aims to lower bound \emph{how often} the data has to be
redundantly encoded, if a given function is to be learned.  The novelty in this paper lies in
establishing that these lower bounds are possible.  For that purpose, the cases for which we prove
lower bounds are natural, but not overly complex, thus highlighting the principle over the
application.

The objects of study of this paper are PQCs of the following form.  The input consists of a single
real number~$x$, which is encoded into amplitudes by applying a multi-qubit Hamiltonian evolution
of the form $e^{-i\eta(x) H}$ at one point (\textit{no redundancy}), or several points in the
quantum circuit.  The function~$\eta$ and Hamiltonian~$H$ may be different at the different points
the quantum circuit.

Hence, our definition of ``input'' is quite general, and allows, for example, that the input is
given in the middle of a quantum circuit --- mimicking the way how algorithms for fault-tolerant
quantum computing operate on continuous data: the subroutine for accessing the data be called
repeatedly; cf., e.g., the description of the input oracles in \cite{van2018improvements}.  It
should be pointed out, however, that general state preparation procedures as in
\cite{harrow-hassidim-lloyd:hhl:2009} and~\cite{Schuld-Bocharov-Svore-Wiebe:circcentric:2018}
cannot not be studied with the tools of this paper, because they apply many operations with
parameters derived from a \emph{collection} of inputs, instead of a \emph{single} input.

Our lower bound technique is based on Fourier analysis.

\subsection{Example}
Take, as example, the parameterized quantum circuits of Mitarai et
al.~\cite{Mitarai-Negoro-Kitagawa-Fujii:q-circ-learn:2018} mentioned above.  Comparing
with~\eqref{eq:minkif-input} shows: The single real input~$x$ close to~$0$ is is prepared by
performing, at $n$~different positions in the quantum circuit, Hamiltonian evolution
$e^{-i\eta_j(x) H_j}$ with $\eta_j(x) := \arcsin(x)$, and $H_j := \sigma_Y/2$, for $j=1,\dots,n$.

We say that the input~$x$ to the quantum circuits of Mitarai et al.\ are encoded with
\textit{input redundancy~$n$} --- meaning, the input is given~$n$ times.

The example highlights the ostensible wastefulness of giving the same data~$n$ times, and the
question naturally arises whether a more clever application of possibly different rotations would
have reduced the amount of input redundancy.

In the case of Mitarai et al.'s example, it can easily be seen --- from algebraic arguments
involving the quantum operations which are performed --- that, in order to produce a polynomial of
degree~$n$, redundancy~$n$ is best possible for the particular way of encoding the value~$x$
\textsl{by applying the Pauli rotation to distinct qubits}, we leave that to the reader.  However,
already the question whether by re-using the same qubit a less ``wasteful'' encoding could have
been achieved is quite not so easy.  Our Fourier analysis based techniques give lower bounds for
more general encodings, in particular, for applying arbitrary single-qubit Pauli rotations to an
arbitrary set of qubits at arbitrary time during the quantum circuit.

Fig.~\ref{fig:qnn} next page shows the schematic of quantum circuits with input~$x$.  The setup
resembles that of a neural network layer.  The $j$'th ``copy'' of the input is made available in
the quantum circuit by, at some time, performing the unitary operation $e^{2\pi i \eta_j(x) H_j}$
on one qubit, where $\eta_j(x) = \varphi(a_j x + b_j)$, for an ``activation function'' $\varphi$.
(We switch here to adding the factor $2\pi$, to be compatible with our Fourier approach.)

In the above-mentioned example in~\cite{Mitarai-Negoro-Kitagawa-Fujii:q-circ-learn:2018}, the
activation function is $\varphi := \arcsin$.  Fig.~\ref{fig:qnn} aims at making clear that the
input can be encoded by applying different unitary operations to different qubits, or to the same
qubit several times, or any combination of these possibilities.  Generalization of our results to
several inputs is straightforward, if the activation functions in Fig.~\ref{fig:qnn} have a single
input.

\begin{figure}[ht]
  \centering
  \hspace*{-10mm} \includegraphics[scale=1.15]{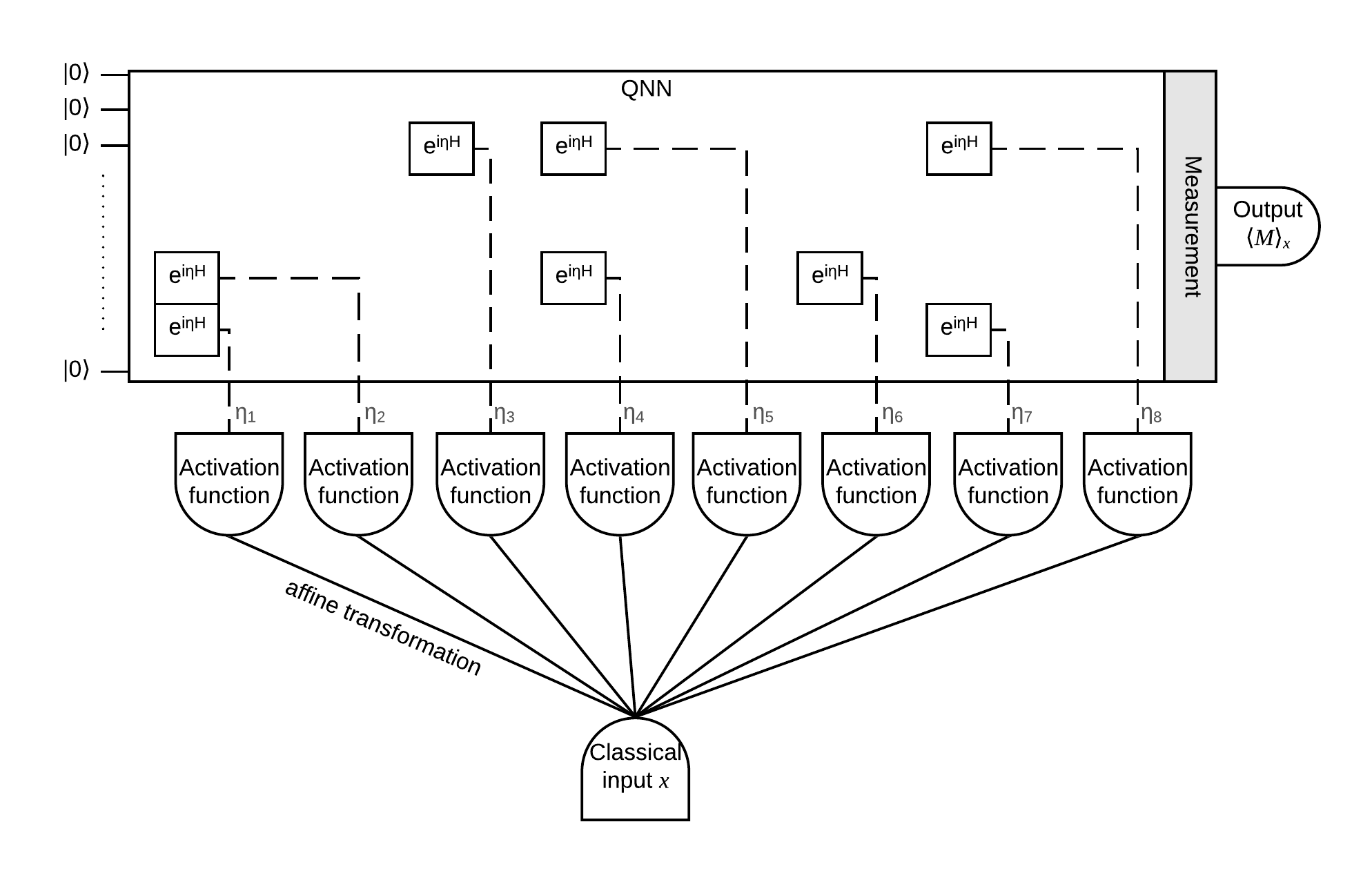}
  \caption{Schematic for the PQCs we consider.  The classical input~$x$, after being subjected to
    a transformations $\eta_j(x) = \varphi(a_j x + b_j)$ with ``activation function'' $\varphi$,
    is fed into the QNN/PQC through Hamiltonian evolution operations $e^{2\pi i \eta_j(x) H_j}$.
    Quantum operations which do not participate in entering the input data into the quantum
    circuit are not shown; these include the operations which depend on the \textit{training
      parameters} $\theta$.}
  \label{fig:qnn}
\end{figure}

\subsection{Our results}
As hinted above, our intention with this paper is to establish, in two natural examples, the
possibility of proving lower bounds on input redundancy.  The first example is what we call
``linear'' input encoding, where the activation function is $\varphi(x) = x$.  The second example
is~\cite{Mitarai-Negoro-Kitagawa-Fujii:q-circ-learn:2018} approach, where the activation function
is $\varphi(x) = \arcsin(x)$.

For both examples, we prove lower bounds on the input redundancy in terms of linear-algebraic
complexity measures of the target function.  We find the lower bounds to be logarithmic, and the
bounds are tight.

To the best of our knowledge, our results give the first quantitative lower bounds on input
redundancy.  These lower bounds, as well as other conclusions derived from our constructions,
should directly influence design decisions for quantum neural network architectures.

\subsection{Paper organization}
In the next section we review the background on the PQC model underlying our results.  Sections
\ref{sec:linear} and~\ref{sec:arcsin} contain the results on linear and arcsine input encoding,
respectively.  We close with a discussion and directions of future work.

\section{Background}
\subsection{MiNKiF PQCs}
We now describe parameterized quantum circuits (PQCs) in more detail.
Denote by
\begin{equation}\label{eq:e^iH}
  U_H(\alpha)\colon \rho \mapsto e^{-2\pi i \alpha H} \rho e^{2\pi i \alpha H}
\end{equation}
the quantum operations of an evolution with Hamiltonian~$H$ (operating on some set of qubits); the
$2\pi$ factor is just a convenience for us and introduces no loss of generality.  Following, in
spirit, \cite{Mitarai-Negoro-Kitagawa-Fujii:q-circ-learn:2018}, in this paper we consider quantum
circuits which apply quantum operations each of which is one of the following:
\begin{enumerate}
\item\label{enum:qopt-type:nopar:input}%
  An operation as in~\eqref{eq:e^iH}, with a parameter $\alpha := \eta$ which will encode input,
  $x$ (i.e., $\eta$ is determined by~$x$);
\item\label{enum:qopt-type:nopar:train}%
  An operation as in~\eqref{eq:e^iH}, with a parameter $\alpha := \theta$ which will be
  ``trained'' (we refer to these parameters as the \textit{training parameters});
\item\label{enum:qopt-type:nopar}%
  Any quantum operation not defined by any parameter (although its effect can \textsl{depend} on
  $\theta$, $\eta$, e.g., via dependency on measurement results).
\end{enumerate}

Denote the concatenated quantum operation by $\mathcal{E}(\eta,\theta)$.
Now let~$M$ be an observable, and consider its expectation value on the state which results if the
parameterized quantum circuit is applied to a fixed input state $\rho_0$, e.g.,
$\rho_0 := \ketbra{0}{0}$.  We denote the expectation value with parameters set to $\eta,\theta$ by~$f(\eta,\theta)$:
\begin{equation}\label{eq:qcirc-fn}
  f\colon
  \RR^n\times\RR^m \to \RR\colon
  (\eta,\theta) \mapsto
  = \tr( M  \mathcal{E}(\eta,\theta) \rho_0 ).
\end{equation}

The PQCs could have multiple outputs, but we do not consider that in this paper.  We refer to PQCs
of this type as \textit{MiNKiF} PQCs, as~\cite{Mitarai-Negoro-Kitagawa-Fujii:q-circ-learn:2018}
realized the fundamental property
\begin{equation}
  \partial_{\theta_j} f(\eta,\theta)
  = \pi \bigl(  f(\eta,\theta+\tfrac14 e_j) - f(\eta,\theta-\tfrac14 e_j) \bigr),
\end{equation}
where $e_j$ is the vector with a~$1$ in position~$j$ and $0$ otherwise.  This equation
characterizes trigonometric functions.  (The same relation holds obviously for derivatives in
$\eta_j$ direction.)

The setting we consider in this paper is the following.
\begin{itemize}
\item The \textbf{training parameters, $\theta$, have been trained perfectly and are thus ignored}, in other
  words, omitting the $\theta$ argument, we conveniently consider~$f$ to be a function defined
  on~$\RR^n$ (instead of on $\RR^n\times\RR^m$);
\item The inputs, $x$, are real numbers;
\item The parameters $\eta$ of~$f$ are determined by~$x$, i.e., $\eta$ is replaced by
  $(\varphi(a_1 x + b_1),\dots,\varphi(a_n x + b_n))$, where $a,b\in\RR^n$; in other words, we
  study the function
  \begin{equation*}
    \RR\to\RR\colon x \mapsto f(\varphi(a_1 x + b_1),\dots,\varphi(a_n x + b_n)).
  \end{equation*}
\end{itemize}
We allow $a,b$ to depend on the target function.\footnote{Cf.~Remarks \ref{rem:Var-Inp-Enc:lin}
  and~\ref{rem:Var-Inp-Enc:arcsin}.  Indeed, our analysis suggests that the $a,b$ should be
  training parameters if the goal is to achieve high expressivity; see the Conclusions.}

This setting is restrictive only in as far as the input is one-dimensional; the reason for this
restriction is that this paper aims to introduce and demonstrate a concept, and not be
encyclopedic or obtain the best possible results.

This setting clearly includes the versions of amplitude encoding discussed in the introduction by
applying operations $U_{H_j}(\varphi(a_j x + b_j))$ to $\ketbra00$ states (of appropriately many
qubits) for several $j$'s, with suitable~$H_j$'s.  However, the setting is more general in that it
doesn't restrict to encode the input near the beginning of a quantum circuit, indeed, the order of
the types of quantum operations is completely free.

To summarize, we study the functions
\begin{subequations}\label{eq:qopt-E}
  \begin{equation}\label{eq:qopt-E:fn}
    f\colon \RR\to\RR\colon
    \eta \mapsto
    f(\eta)
    :=
    \tr \Bigl( M\, V_n \, U_n(\eta_n) \, V_{n-1} \, U_{n-1}(\eta_{n-1}) \; \dots \; V_1 \, U_1(\eta_1) \, V_0 \, \rho_0 \Bigr)
  \end{equation}
  where
  \begin{equation}\label{eq:qopt-E:Ham}
    U_1:=U_{H_1}, \dots, U_n := U_{H_n}
    \\
    \text{for Hamiltonians } H_j, \text{ $j=1,\dots,n$.}
  \end{equation}
\end{subequations}
and
\begin{equation}\label{eq:qopt-E:fn-x}
  \RR\to\RR \colon x \mapsto f(\eta(x))
\end{equation}
where $\eta\colon\RR\to\RR^n\colon x\mapsto \varphi(a_1x+b_1),\dots,\varphi(a_nx+b_n)$.
Then we ask the question: How large is the space of the $x\colon f(\eta(x))$, for a fixed
activation function $\varphi$, but variable vectors $a,b\in\RR^n$?

\subsection{Fourier calculus on MiNKiF circuits}
This paper builds on the simple observation of~\cite{GilVidal-Theis:CalcPQC:2018} that, under
assumptions which are reasonable for near-term gate-based quantum computers, the Fourier spectrum,
in the sense of the Fourier transform of tempered distributions, is finite and can be understood
from the eigenvalues of the Hamiltonians.  In particular, if, for each of the Hamiltonians~$H_j$,
$j=1,\dots,n$, the differences of the eigenvalue of~$H_j$ are integer multiples of a positive
number $\kappa_j$, then $\eta\mapsto f(\eta)$ is periodic.

Take, for example, the case of Pauli rotations ($e^{-2\pi i \sigma_\star/2}$ in our notation):
There, each of the $H_j$ is of the form $\sigma_{u_j}/2$ (with $u_j\in\{x,y,z\}$).  The
eigenvalues of $H_j$ are $\pm\nfrac12$, the eigenvalue differences are $0,\pm 1$, and
$f\colon\RR^n\to\RR$ is 1-periodic\footnote{This is where the factor $2\pi$ in the exponent is
  used.} in every parameter, with Fourier spectrum contained in
\begin{equation}\label{eq:Fou-spec-simple}
  \ZZ_3^n := \{0,\pm1\}^n.
\end{equation}

More generally, if the $H_j$ have eigenvalues, say, $\lambda_j^{(0)}\in\RR$ and
$\lambda_j^{(s)} = \lambda_j^{(0)}+s$ for $s=1,\dots,K_j$, then the eigenvalue differences are
$\{-K_j,\dots,K_j\}$, and $f\colon\RR^n\to\RR$ is 1-periodic in every parameter, with Fourier
spectrum contained in $\prod_{j=1}^n \{-K_j,\dots,K_j\}$.

We refer to~\cite{GilVidal-Theis:CalcPQC:2018} for the (easy) details.  In this paper, focusing on
the goal of demonstrating the possibility to prove lower bounds on the input redundancy, we mostly
restrict to 2-level Hamiltonians with eigenvalue difference~1 (such as one-half times a tensor
product of Pauli matrices), which gives us the nice Fourier spectrum~\eqref{eq:Fou-spec-simple},
commenting on other spectra only \textsl{en passant}.

For easy reference, we summarize the discrete Fourier analysis properties of the expectation value
functions that we consider in the following remark.  The proof of the equivalence of the three
conditions is contained in the above discussions, except for the existence of a quantum circuit
for a given multi-linear trigonometric polynomial, for which we defer~\cite{GilVidal:PhD:2020}, as
it is not the topic of this paper.

\begin{remark}\label{rem:fourier-expansion}
  The following three statements are equivalent.  If they hold, we refer to the function as an
  \textit{expectation value function,} for brevity (suppressing the condition on the eigenvalues
  of the Hamiltonians).  The input redundancy of the function is~$n$.

  \begin{enumerate}
  \item\label{rem:fourier-expansion:expval}%
    The function~$f$ is of the form~\ref{eq:qopt-E}, where the $H_j$, $j=1,\dots,n$, have
    eigenvalues $\pm\nfrac12$.

  \item\label{rem:fourier-expansion:C}%
    The function~$f$ is a real-valued function $\RR^n\to\RR$ which is $1$-periodic in every
    parameter, and its Fourier spectrum is contained in $\ZZ_3^n$.  Hence,
    \begin{equation}\label{eq:f-FouExp-C}
      f(\eta)
      =
      \sum_{w \in \ZZ_3^n} \hat f(w) e^{2\pi i w \bullet \eta}
    \end{equation}
    where $w \bullet \eta := \sum_{j=1}^n \eta_j w_j$ is the dot product (computed in $\RR$), and
    $\hat f$ the usual periodic Fourier transform of~$f$, i.e.,
    $\hat f(w) = \int_{[0,1]^n} e^{-2\pi i w \bullet \eta}f(\eta)\,d\eta$.

  \item\label{rem:fourier-expansion:R}%
    The function~$f$ is a multi-linear polynomial in the sine and cosine functions, i.e.,
    \begin{equation}\label{eq:f-FouExp-R}
      f(\eta)
      =
      \sum_{\tau\in \{1,\cos,\sin\}^n} \tilde f_{\tau} \, \prod_{j=1}^n \tau_j(2\pi \eta_j),
    \end{equation}
    (where ``$1$'' under the sum denotes the all-1 function).
\end{enumerate}
\end{remark}

\section{Linear input encoding}\label{sec:linear}
We start discussing the case where the input parameters are affine functions of the input
variable, e.g., $\varphi=\id$ and $\eta(x) = x\cdot a + b$ for some $a,b\in\RR^n$, so the input
redundancy is~$n$.

For $a\in\RR^n$ define
\begin{subequations}\label{eq:def-spread}
  \begin{align}
    K_a        &:= \bigl\{ w\bullet a \bigm| w\in\ZZ_3^n \bigr\}, \\
    \spread(a) &:= \frac{1}{2} \abs{ K_a\setminus\{0\} },
  \end{align}
\end{subequations}
with $\abs{\cdot}$ denoting set cardinality; we refer to $\spread(a)$ as the \textit{spread} of $a$.
We point the reader to the fact that $K_a$ is symmetric around $0\in\RR$ and $0\in K_a$, so that
the spread is a nonnegative integer.

For every $k\in\RR$, consider the function
\begin{equation}\label{eq:def:chi_k}
  \chi_k\colon  \RR \to \CC \colon  t \mapsto e^{2\pi i k t}.
\end{equation}
\newcommand{\ContFun}{\mathscr C}%
These functions are elements of the vector space $\CC^\RR$ of all complex-valued functions on the
real line.  We note the following well-known fact.

\begin{lemma}\label{lem:chindependence}
  The functions $\chi_k$, $k\in\RR$, defined in~\eqref{eq:def:chi_k} are linearly independent (in
  the algebraic sense, i.e., every finite subset is linearly independent).

  Moreover, for every $x_0\in\RR$ and $\eps>0$, the restrictions of these functions to the
  interval $\lt]x_0-\eps,x_0+\eps\rt[$ are linearly independent.
\end{lemma}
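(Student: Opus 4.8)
The plan is to exploit that each $\chi_k$ is an eigenfunction of differentiation, which reduces both assertions to the elementary linear-algebra fact that eigenvectors for distinct eigenvalues are linearly independent. Concretely, $\chi_k$ is smooth and satisfies $\chi_k' = 2\pi i k\,\chi_k$, so $\chi_k$ is an eigenvector, with eigenvalue $2\pi i k$, of the differentiation operator $D\colon f\mapsto f'$ acting on the vector space of smooth complex-valued functions.

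First I would treat the global statement. Given distinct reals $k_1,\dots,k_m$, the eigenvalues $2\pi i k_1,\dots,2\pi i k_m$ are pairwise distinct; since each $\chi_{k_j}$ is a nonzero eigenvector of $D$, the standard argument that eigenvectors for distinct eigenvalues are linearly independent gives that $\chi_{k_1},\dots,\chi_{k_m}$ are linearly independent. As every finite subset of the family $\{\chi_k\}_{k\in\RR}$ is of this form, the whole family is linearly independent in the algebraic sense.

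For the \emph{moreover} part, the key point is that differentiation is a local operator, so the same argument goes through verbatim after restricting to the open interval $I := \, ]x_0-\eps, x_0+\eps[\,$: the restriction $\chi_k|_I$ is still a nonzero smooth function on $I$ satisfying $(\chi_k|_I)' = 2\pi i k\,(\chi_k|_I)$, hence still a nonzero eigenvector, with the same eigenvalue $2\pi i k$, of $D$ now acting on smooth functions on $I$. Distinct $k$ again yield distinct eigenvalues, and the eigenvector argument applies unchanged.

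Alternatively, one can give a direct proof via a Vandermonde determinant, which makes transparent why the interval causes no difficulty. Suppose $\sum_{j=1}^m c_j \chi_{k_j}$ vanishes identically on $I$. Differentiating $p$ times and evaluating at the interior point $x_0$ yields $\sum_{j=1}^m c_j (2\pi i k_j)^p \chi_{k_j}(x_0) = 0$ for every $p\ge 0$. For $p = 0,\dots,m-1$ this is a linear system in the unknowns $c_j\chi_{k_j}(x_0)$ whose coefficient matrix is the Vandermonde matrix with nodes $2\pi i k_1,\dots,2\pi i k_m$; since these nodes are distinct the matrix is invertible, forcing $c_j\chi_{k_j}(x_0)=0$, and as $\chi_{k_j}(x_0)\ne 0$ we conclude $c_j = 0$ for all $j$. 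I expect no genuine obstacle here; the only point requiring any care is precisely the localization to $I$, and both arguments dispose of it cleanly because all derivatives of $\chi_k$ exist at the interior point $x_0$.
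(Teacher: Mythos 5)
Your proof is correct, but it takes a genuinely different route from the paper's. The paper treats the two assertions separately: for the global statement it passes to the Fourier transform in the sense of tempered distributions, where $\chi_k$ becomes the Dirac distribution centered at $k$, and the linear independence of distinct Dirac deltas is taken as evident; for the restriction to $\lt]x_0-\eps,x_0+\eps\rt[$ it notes that a finite linear combination $g$ of the $\chi_k$ is analytic, so that $g$ vanishing on an interval forces $g\equiv 0$ everywhere, reducing the local case to the global one. You instead use that $\chi_k$ is an eigenfunction of $D\colon f\mapsto f'$ with eigenvalue $2\pi i k$, so both assertions follow at once from the standard fact that eigenvectors for distinct eigenvalues are linearly independent; since differentiation is local, the interval case is literally the same argument, with no separate reduction step. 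Your Vandermonde variant is likewise sound --- the one point needing care, that $\chi_{k_j}(x_0)\neq 0$, is handled. What your approach buys is elementarity and self-containedness: no distribution theory and no identity theorem for analytic functions are required, and the ``moreover'' part comes for free. What the paper's approach buys is largely expository: the distributional Fourier picture is aimed at a physics readership, and the analytic-continuation step it uses for localization is of the same flavor as the arguments the paper deploys again in the arcsine-encoding section.
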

\begin{proof}
  We refer the reader to Appendix~\ref{apx:proofs:lem:chindependence} for the first statement and
  only prove the second one.

  Suppose that for some finite set $K\subset \RR$ and complex numbers $\alpha_k$, $k\in K$ we have
  $g(z) := \sum_{j=1}^m \alpha_j \chi_{k_j}(z) = 0$ for all $z\in\lt]x_0-\eps,x_0+\eps\rt[$.  Since~$g$
  is analytic and non-zero analytic functions can only vanish on a discrete set, we then must also
  have $g(z) = 0$ for all $z\in\CC$.  This means that the linear dependence on an interval implies
  linear dependence on the whole real line.  This proves the second statement, and the proof of
  Lemma~\ref{lem:chindependence} is completed.
\end{proof}

We can now give the definition of the quantity which will lower-bound the input redundancy for
linear input encoding.

\begin{definition}\label{def:FouRk}
  The \textit{Fourier rank} of a function $h\colon \RR \to \RR$ at a point~$x_0\in\RR$ is the
  infimum of the numbers~$r$ such that there exists an $\eps>0$, a set
  $K\subset\RR\setminus\{0\}$ of size~$2r$, and coefficients $\alpha_k\in\CC$, $k\in \{0\}\cup K$
  such that
  \begin{equation}\label{eq:fou-rep}
    h(x) = \sum_{k \in \{0\}\cup K} \alpha_k \,\chi_k(x) \quad\text{for all $x\in\lt]x_0-\eps,x_0+\eps\rt[$.}
  \end{equation}
\end{definition}

Note that the Fourier rank can be infinite, and if it is finite, then it is a nonnegative integer.
Indeed, from $h^*=h$ it follows that
$\sum_{k \in \{0\}\cup K} \alpha_k\chi_k = \sum_{k \in \{0\}\cup K} \alpha_k^* \chi_{-k}$, so that
by the linear independence of the $\chi$'s (Lemma~\ref{lem:chindependence}) we have
$\alpha_{-k} = \alpha_k^*$, which means that in a minimal representation of~$h$, the set~$K$ is
symmetric around~$0\in\RR$.

\textsc{Examples.}
\vspace{-1ex}%
\begin{itemize}
\item Constant functions have Fourier rank~0 at every point.
\item The trigonometric functions $x\mapsto\cos(\kappa x+\phi)$, with $\kappa\ne0$, have Fourier
rank~1 at every point.
\item Trigonometric polynomials of degree~$d$,
$x\mapsto\sum_{j=0}^d \alpha_j \cos^j(\kappa_j x+\phi_j)$, have Fourier rank~$d$ at every point,
if $\alpha_d\ne 0$, $\kappa_d \ne 0$.
\item The function $x\mapsto\abs{\sin(\pi x)}$ has Fourier rank~1 at every $x_0\in \RR\setminus\ZZ$ and
infinite Fourier rank at the points $x_0\in\ZZ$.
\item The function $x\mapsto x$ has infinite Fourier rank at every point.
\end{itemize}

\begin{theorem}\label{thm:spread-ge-FouRk}
  Let~$f$ be an expectation value function, i.e., as in
  Remark~\ref{rem:fourier-expansion}. Moreover, let $a, b\in\RR^n$, and
  $h\colon \RR\to\RR\colon x \mapsto f(x \cdot a + b)$.
  For every $x_0\in \RR$, the Fourier rank of~$h$ at~$x_0$ is less than or equal to the spread
  of~$a$.
\end{theorem}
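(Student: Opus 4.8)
The plan is to feed the linear encoding $\eta = x\cdot a + b$ directly into the finite Fourier expansion of $f$ supplied by Remark~\ref{rem:fourier-expansion}, namely~\eqref{eq:f-FouExp-C}, and then read off a representation of $h$ of exactly the shape demanded by Definition~\ref{def:FouRk}. Starting from $f(\eta) = \sum_{w\in\ZZ_3^n}\hat f(w)\,e^{2\pi i\,w\bullet\eta}$ and substituting $\eta = xa+b$, the one algebraic step that matters is the splitting
\begin{equation*}
  w\bullet(xa+b) = (w\bullet a)\,x + (w\bullet b),
\end{equation*}
which separates the $x$-dependent part, a pure frequency $w\bullet a$, from a constant phase $w\bullet b$ that can be absorbed into the coefficient. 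This yields
\begin{equation*}
  h(x) = \sum_{w\in\ZZ_3^n}\hat f(w)\,e^{2\pi i\,w\bullet b}\,\chi_{w\bullet a}(x),
\end{equation*}
an exact identity valid for all $x\in\RR$; there are no convergence subtleties, since the sum ranges over the finite set $\ZZ_3^n$.

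Next I would group the $3^n$ summands according to the value of their frequency $w\bullet a$. The set of distinct frequencies that appears is by definition $K_a = \{w\bullet a \mid w\in\ZZ_3^n\}$, so collecting coefficients gives $h(x) = \sum_{k\in K_a}\alpha_k\,\chi_k(x)$ with $\alpha_k := \sum_{w:\,w\bullet a = k}\hat f(w)\,e^{2\pi i\,w\bullet b}$. Since $0\in K_a$ and $K_a$ is symmetric about the origin (as noted just before Definition~\ref{def:FouRk}), I can write $K_a = \{0\}\cup K$ with $K := K_a\setminus\{0\}$, and the definition of the spread gives precisely $\abs{K} = 2\spread(a)$. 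Thus $h$ admits a representation of the form~\eqref{eq:fou-rep} with frequency set $K\subset\RR\setminus\{0\}$ of size $2\spread(a)$, valid on all of $\RR$ and hence on any interval $\lt]x_0-\eps,x_0+\eps\rt[$. By Definition~\ref{def:FouRk} this bounds the Fourier rank of $h$ by $\spread(a)$; since the representation holds on all of $\RR$ independently of $x_0$, the bound is obtained simultaneously for every $x_0$, as claimed.

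There is no serious obstacle here, and the proof is essentially this computation; the one place that genuinely needs care is the bookkeeping in the grouping step. First, distinct $w$ can collide to the same frequency $w\bullet a$, so the number of relevant frequencies is $\abs{K_a}$, not $3^n$. Second, Definition~\ref{def:FouRk} measures the rank through a frequency set of \emph{even} size $2r$, and it is exactly the symmetry of $K_a$ about the origin that guarantees $\abs{K}$ is even and equals $2\spread(a)$, matching $r=\spread(a)$. Finally, one should observe that some $\alpha_k$ may cancel to zero after the grouping; this is harmless, as it only means the true Fourier rank of $h$ could be strictly smaller, which is fully consistent with the inequality being proved.
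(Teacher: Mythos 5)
Your proposal is correct and follows essentially the same route as the paper's proof: substitute $\eta = xa+b$ into the finite Fourier expansion~\eqref{eq:f-FouExp-C}, split $w\bullet(xa+b)$ into frequency and phase, group terms by the value of $w\bullet a$ over $K_a$, and invoke Definition~\ref{def:FouRk} with $K = K_a\setminus\{0\}$. Your bookkeeping remarks (collisions of frequencies, symmetry of $K_a$, possible cancellation of coefficients) are all consistent with the paper's argument and, if anything, slightly more careful about the step $\abs{K_a\setminus\{0\}} = 2\spread(a)$.
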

\begin{proof}
  With the preparations above, this is now a piece of cake.  Let $x_0\in\RR$ and set $\eps:=1$.  With
  $K_a$ as defined in~\eqref{eq:def-spread}, for $x\in\lt]x_0-\eps,x_0+\eps\rt[$, we have
    \begin{align*}
      h(x) = f(x \cdot a + b)
      &= \sum_{w\in\ZZ_3^n} \hat f(w) e^{2\pi i w\bullet(x \cdot a + b)} &&\comment{Remark~\ref{rem:fourier-expansion}\ref{rem:fourier-expansion:C}}
      \\
      &= \sum_{w\in\ZZ_3^n} \hat f(w) e^{2\pi i w\bullet b} e^{2\pi i x\cdot w\bullet a}
      \\
      &= \sum_{k \in K_a} \Bigl(
        \sum_{\substack{w\in\ZZ_3^n,\\ w \bullet a = k}} \hat f(w) e^{2\pi i w\bullet b}
      \Bigr) \; e^{2\pi i x\cdot k}
      \\
      &= \sum_{k \in K_{a}} \alpha_k \, \chi_k(x),
    \end{align*}
    where we let
    \begin{equation*}
      \alpha_k := \sum_{\substack{w\in\ZZ_3^n,\\ w \bullet a = k}} \hat f(w) e^{2\pi i w\bullet b}
    \end{equation*}
    This shows that~$h$ has a representation as in~\eqref{eq:fou-rep} with
    $K := K_a\setminus\{0\}$.  It follows that the Fourier rank of~$h$ is bounded from above by
    $\abs{K_a}/2 = \spread{a}$.
    This completes the proof of Theorem~\ref{thm:spread-ge-FouRk}.
\end{proof}

The theorem allows us to give the concrete lower bounds for the input redundancy.

\begin{corollary}\label{cor:main-linear}
  Let~$h$ be a real-valued function defined in some neighborhood of a point $x_0\in\RR$.

  Suppose that in a neighborhood of $x_0$, $h$ is equal to an expectation value function with
  linear input encoding, i.e., there is an~$n$, a function~$f$ as in
  Remark~\ref{rem:fourier-expansion}, vectors $a,b\in\RR^n$, and an $\eps > 0$ such that
  $h(x) = f(x\cdot a+b)$ holds for all $x\in\lt]x_0-\eps,x_0+\eps\rt[$.

  The input redundancy, $n$, is greater than or equal to $\log_3(r+1)$, where~$r$ is the Fourier
  rank of~$h$ at~$x_0$.
\end{corollary}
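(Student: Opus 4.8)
The plan is to chain Theorem~\ref{thm:spread-ge-FouRk} with a one-line counting bound on the spread, and then solve for~$n$. The theorem already does the analytic work: it tells us that whenever $h$ agrees with an expectation value function $f(x\cdot a+b)$ near $x_0$, the Fourier rank $r$ of $h$ at $x_0$ satisfies $r\le\spread(a)$. So the only thing left is to control $\spread(a)$ from above purely in terms of the input redundancy~$n$, and this is a finiteness-of-image count for the map $w\mapsto w\bullet a$.

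Concretely, first I would invoke the hypothesis: since $h(x)=f(x\cdot a+b)$ holds on $\lt]x_0-\eps,x_0+\eps\rt[$ for some expectation value function $f$ as in Remark~\ref{rem:fourier-expansion} and some $a,b\in\RR^n$, Theorem~\ref{thm:spread-ge-FouRk} applies verbatim and gives $r\le\spread(a)$; in particular $r$ is finite. Next I would bound the spread. Recall from~\eqref{eq:def-spread} that $K_a=\{\,w\bullet a \mid w\in\ZZ_3^n\,\}$ is the image of the set $\ZZ_3^n=\{0,\pm1\}^n$, which has exactly $3^n$ elements, under the map $w\mapsto w\bullet a$; hence $\abs{K_a}\le 3^n$. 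Because $0=0\bullet a\in K_a$ and (as already noted in the text) $K_a$ is symmetric about~$0$, the set $K_a\setminus\{0\}$ has even cardinality $2\spread(a)$, so that
\begin{equation*}
  2\spread(a)+1=\abs{K_a}\le 3^n.
\end{equation*}

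Combining the two inequalities yields $2r+1\le 2\spread(a)+1\le 3^n$, i.e.\ $n\ge\log_3(2r+1)$. Since $2r+1\ge r+1$ for every $r\ge 0$, this implies the claimed bound $n\ge\log_3(r+1)$ (and the $r=0$ case is the trivial $n\ge 0$). I do not expect a genuine obstacle here: the statement is essentially a corollary by design, with the substance already carried by Theorem~\ref{thm:spread-ge-FouRk} and by Lemma~\ref{lem:chindependence}. The only point requiring care is the bookkeeping around the element $0\in K_a$ — one must use that $\spread$ counts $K_a\setminus\{0\}$ and divides by two, so that the count produces the ``$+1$'' in $2\spread(a)+1=\abs{K_a}$ rather than an off-by-one error. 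It is worth remarking that this argument in fact delivers the slightly stronger bound $n\ge\log_3(2r+1)$; I would either state it in this sharper form or note explicitly that it dominates the advertised $\log_3(r+1)$.
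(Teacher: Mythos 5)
Your proposal is correct and follows essentially the same route as the paper: invoke Theorem~\ref{thm:spread-ge-FouRk} to get $r\le\spread(a)$, bound $\abs{K_a}\le 3^n$ so that $\spread(a)\le(3^n-1)/2$, and solve for~$n$ to obtain $n\ge\log_3(2r+1)\ge\log_3(r+1)$. Your explicit bookkeeping of $\abs{K_a}=2\spread(a)+1$ via the symmetry of $K_a$ about $0$ is a slightly more careful rendering of the same counting step the paper performs.
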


\begin{center}
\fbox{%
  \begin{minipage}[h]{.75\linewidth}
    To represent a function~$h$ by a MiNKiF PQC with linear input encoding in a tiny neighborhood
    of a given point~$x_0$, the input redundancy must be at least the logarithm of the Fourier
    rank of~$h$ at~$x_0$.
  \end{minipage}%
}%
\end{center}

\begin{proof}[Proof of Corollary~\ref{cor:main-linear}]
  For every~$a\in\RR^n$, we have $\abs{K_a}\le 3^n$, by the definition of~$K_a$, and hence $\spread(a)\le (3^n-1)/2$.

  We allow that $a,b$ are chosen depending on~$h$ (see the Remark~\ref{rem:Var-Inp-Enc:lin} below).
  Theorem~\ref{thm:spread-ge-FouRk} gives us the inequality
  \begin{equation*}
    r \le \max_{a}\spread(a) \le (3^n-1)/2,
  \end{equation*}
  which implies $n \ge \log_3(2r+1) \ge \log_3(r+1)$, as claimed.  (We put the $+1$ to make the
  expression well-defined for $r=0$.)
  This concludes the proof of Corollary~\ref{cor:main-linear}.
\end{proof}

\begin{remark}\label{rem:Var-Inp-Enc:lin}
  If the entries of~$a$ are all equal up to sign, then we have $\spread(a)=n$.  It can be seen
  that if the entries of~$a$ are chosen uniformly at random in $[0,1]$, then
  $\spread(a)=(3^n-1)/2$.  Hence, it seems that some choices for~$a$ are better than others.
  Moreover, looking into the proof of Theorem~\ref{thm:spread-ge-FouRk} again, we see that the
  $\chi_k$, $k\in K_a$, must suffice to represent (or approximate) the target function, and that
  the entries of~$b$ play a role in which coefficients $\alpha_k$ can be chosen for a given~$a$.
  Hence, it is plausible that the choices of $a,b$ should depend on~$h$.
\end{remark}

\begin{remark}
  Our restriction to Hamiltonians with two eigenvalues leads to the definition of the spread
  in~\eqref{eq:def-spread}.  If the set of eigenvalue distances of the Hamiltonian encoding the
  input $\eta_j$ is $D_j \subset \RR$, then, for the definition of the spread, we must put this:
  \begin{equation*}
    K_a := \bigl\{ w\bullet a \bigm| w \in\prod_{j=1}^n D_j \bigr\}.
  \end{equation*}
  Theorem~\ref{thm:spread-ge-FouRk} and Corollary~\ref{cor:main-linear} remain valid, with
  essentially the same proofs, but with a higher base for the logarithm.
\end{remark}

\section{Arcsine input encoding}\label{sec:arcsin}
We now consider the original situation of the example
in~\cite{Mitarai-Negoro-Kitagawa-Fujii:q-circ-learn:2018}, where the activation function is
$\varphi = \arcsin$.  More precisely, for $a,b\in\RR^n$, we consider
\begin{equation*}
  \eta(x) := \arcsin( (a x + b)/(2\pi) ).
\end{equation*}
Abbreviating $s_j := a_j x + b_j$ and $c_j := \sqrt{1 - s_j^2}$ for $j=1,\dots,n$,
Remark~\ref{rem:fourier-expansion}\ref{rem:fourier-expansion:R}, gives us that the expectation
value functions with arcsine input encoding are of the form
\begin{equation}\label{eq:arcsin:expval-scmono}
  h(x)
  = f(\eta(x))
  =
  \sum_{\substack{S,C\subseteq[n]\\S\cap C = \emptyset}}
  \tilde f_{S,C} \prod_{j\in S} s_j \prod_{j\in C} c_j
  =
  \sum_{\substack{S,C\subseteq[n]\\S\cap C = \emptyset}}
  \tilde f_{S,C} \prod_{j\in S} (a_j x + b_j) \prod_{j\in C} \sqrt{1-(a_j x + b_j)^2},
\end{equation}
where we use the common shorthand $[n] := \{1,\dots,n\}$, and set
$\tilde f_{S,C} := \tilde f_{\tau(S,C)}$ with $\tau_j(S,C) = \sin$ if $j\in S$, $\tau_j(S,C)=\cos$
if $j\in C$, and $\tau_j(S,C)=\id$ otherwise.

Consider a formal expression of the form
\begin{equation}\label{eq:def:sc-mono}
  \mu^{(a,b)}_{S,C} := \prod_{j\in S} (a_j x + b_j) \prod_{j\in C} \sqrt{1-(a_j x + b_j)^2}
\end{equation}
where~$x$ is a variable (for arbitrary $a,b\in\RR^n$ and $S,C\subseteq[n]$ with
$S\cap C = \emptyset$).  We call it an \textit{sc-monomial} of degree $\abs{S}+\abs{C}$.
An sc-monomial can be evaluated at points $x\in\RR$ for which the expression under the square root
is not a negative real number, i.e., in the interval
\begin{equation}\label{eq:def:sc-mono-Int}
  I_\mu := \bigcap_{j\in C} \, \lt] \tfrac{-1-b_j}{a_j} , \tfrac{+1-b_j}{a_j} \rt[
\end{equation}
(which could be empty), and it defines an analytic function there.  Note, though, that it can
happen that an sc-monomial can be continued to an analytic function on a larger interval than
$I_\mu$.  The obvious example where that happens is this: For $j,j'\in C$ with $j\ne j'$ we have
$(a_j,b_j)=\pm(a_{j'},b_{j'})$.  In that case, the formal power series of the sc-monomial
simplifies, and omitting the interval $\lt] \frac{-1-b_j}{a_j} , \frac{+1-b_j}{a_j} \rt[$ (also
for~$j'$) from~\eqref{eq:def:sc-mono-Int} makes the intersection larger.

The following technical fact can be shown (cf.~\cite{GilVidal:PhD:2020}).

\begin{lemma}\label{lem:convrad-poly}
  Let $g = \sum_j \alpha_j \mu_j$ be a linear combination of sc-monomials with degrees at
  most~$d$, and suppose that $\bigcap_j I_{\mu_j} \ne \emptyset$.
  If an analytic continuation of~$g$ to a function $\tilde g\colon\RR\to\RR$ exists,
  then~$\tilde g$ is a polynomial of degree at most~$d$.
\end{lemma}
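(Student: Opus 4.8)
The plan is to analyze the branch-point structure of the square-root factors and to use analyticity of $\tilde g$ to force all of them to cancel. First I would put $g$ into a normal form. Each sc-monomial $\mu_j$ is a polynomial in $x$ times a product of factors $\sqrt{D}$ with $D = 1-(\alpha x+\beta)^2$ a squarefree quadratic (absorbing the constant square roots arising when $\alpha=0$ into the coefficients). Collecting the finitely many distinct quadratics $D$ that occur and repeatedly using $\sqrt{D}^2 = D$, I write $g = \sum_{C} p_C(x)\prod_{i\in C}\sqrt{D_i}$, where the $D_i$ are the distinct quadratics and each surviving term has $\deg p_C + |C|\le d$. Since the $D_i$ may be multiplicatively dependent modulo squares in $\RR(x)$, I then pass to a maximal subset whose classes form an $\FF_2$-basis of the subgroup of $\RR(x)^{*}/(\RR(x)^{*})^{2}$ they generate; renaming, these are $D_1,\dots,D_r$. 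Every other $\sqrt{D}$ is then a polynomial divided by a product of the $D_i$, times $\prod_{i\in C}\sqrt{D_i}$ for the appropriate $C$ (the relevant product of quadratics being a perfect square). This yields $g = \sum_{C\subseteq[r]} q_C(x)\prod_{i\in C}\sqrt{D_i}$ with $q_C\in\RR(x)$ whose only possible poles are the real roots of the $D_i$ (the \emph{branch points}), and with the crucial property that the functions $\prod_{i\in C}\sqrt{D_i}$, $C\subseteq[r]$, are linearly independent over $\RR(x)$. The goal becomes: $q_C=0$ for all $C\ne\emptyset$, and $q_\emptyset$ is a polynomial of degree at most~$d$.

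The heart of the argument is a local parity computation at a branch point. Let $x^*$ be an endpoint of the nonempty open interval $J := \bigcap_j I_{\mu_j}$ on which $g$ is defined and equals $\tilde g$; such an endpoint is a zero of some $D_i$, so the ramifying set $A := \{i : D_i(x^*)=0\}$ is nonempty. On $J$ the radicands are positive, so writing $x = x^*-s^2$ I have $\sqrt{D_i} = s\,u_i(x)$ for $i\in A$ (with $u_i$ analytic and nonvanishing) and $\sqrt{D_i}$ analytic for $i\notin A$. Hence $g = P(s^2) + s\,Q(s^2)$ near $s=0$, the term with index set $C$ landing in the first or second summand according to the parity of $|C\cap A|$. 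Since $\tilde g$ is analytic in~$x$ it is even in~$s$, and $\tilde g = g$ for $s>0$; therefore $s\,Q(s^2)\equiv 0$, i.e.\ $\sum_{C:\,|C\cap A|\text{ odd}} q_C\prod_{i\in C}\sqrt{D_i} = 0$. By the linear independence of the $\prod_{i\in C}\sqrt{D_i}$ over $\RR(x)$ this forces $q_C=0$ whenever $|C\cap A|$ is odd. After deleting these terms every surviving term has $|C\cap A|$ even, so $g$ contains only even powers of~$s$ at~$x^*$; thus $g$ extends analytically across $x^*$, and I may enlarge $J$ past $x^*$ to the next branch point and repeat.

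I would iterate this over the finitely many branch points, shrinking the surviving index sets, until for every surviving $C$ and every branch point $x^*$ the number $|C\cap A(x^*)|$ is even. The key combinatorial observation is that this is impossible for nonempty~$C$: because $D_1,\dots,D_r$ were chosen so that no nonempty product $\prod_{i\in C}D_i$ is a square in $\RR(x)$, and each $D_i$ is squarefree, such a product must have a root of odd multiplicity, i.e.\ there is a point where an odd number of the $D_i$ with $i\in C$ vanish --- a branch point with $|C\cap A|$ odd. Hence only $C=\emptyset$ survives and $g = q_\emptyset$. As $q_\emptyset$ is a rational function whose only possible poles are the real branch points, while $g=\tilde g$ is analytic there, $q_\emptyset$ has no poles and is a polynomial. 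Finally, for the degree: along the real axis each linear factor $\alpha x+\beta$ and each factor $\sqrt{1-(\alpha x+\beta)^2}$ has magnitude $O(|x|)$ as $|x|\to\infty$, so every sc-monomial of degree at most~$d$ is $O(|x|^d)$ and hence $|\tilde g(x)| = O(|x|^d)$; a polynomial with this growth has degree at most~$d$.

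The main obstacle, I expect, is the bookkeeping around dependent square roots: passing to the $\FF_2$-independent set $D_1,\dots,D_r$, verifying that the coefficients $q_C$ then become rational functions with poles only at branch points, and confirming the dichotomy that the basis property is equivalent to every nonempty~$C$ ramifying with odd parity at some point. A secondary care point is the sign and branch matching in the local model $\sqrt{D_i}=s\,u_i$, which relies on the radicands being positive on~$J$ and on $\tilde g$ agreeing with the correct real branch of~$g$ on one side of~$x^*$.
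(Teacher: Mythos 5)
The paper does not actually prove Lemma~\ref{lem:convrad-poly}: it is stated as a ``technical fact'' with the proof deferred to the thesis \cite{GilVidal:PhD:2020}, so there is no in-paper proof to compare against and I can only assess your argument on its own terms. Its main line is sound: the reduction to an $\FF_2$-basis of square classes, the Kummer-type linear independence of the products $\prod_{i\in C}\sqrt{D_i}$ over $\RR(x)$, the local even/odd-in-$s$ splitting at a branch point combined with the identity theorem, and the observation that a nonempty $C$ whose product $\prod_{i\in C}D_i$ is a non-square must ramify with odd parity at some \emph{real} point (all roots of the $D_i$ are real; note you also need positivity of the product on $J$ to exclude the case ``$-1$ times a square'', where all roots are even yet the class is nontrivial). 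Two pieces of the bookkeeping you flag need care but cause no trouble: the coefficients $q_C$ may themselves have poles at the branch point $x^*$, so $g=P(s^2)+s\,Q(s^2)$ is a priori only a Laurent expansion --- but matching it against the Taylor expansion of $\tilde g$ kills the negative part along with the odd part; and after crossing $x^*$ the surviving products become $\pm\prod_i\sqrt{\abs{D_i}}$, whose linear independence over $\RR(x)$ persists because the class of $-\prod_{i\in C}D_i$ is never trivial for a product that is positive on~$J$.

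The one genuine gap is the final degree bound. You argue that each sc-monomial is $O(\abs{x}^d)$ along the real axis and ``hence'' $\abs{\tilde g(x)}=O(\abs{x}^d)$. This inference is invalid: the identity $\tilde g=\sum_j\alpha_j\mu_j$ holds only on the bounded interval $\bigcap_jI_{\mu_j}$ (outside it the radicands are negative and the formula does not compute $\tilde g$ at all), so pointwise bounds on the $\mu_j$ at infinity say nothing directly about $\tilde g$ at infinity. The step is repairable in either of two ways: track degrees through your normal form (each substitution $\sqrt{D}=R\prod_{i\in C}\sqrt{D_i}$ preserves ``numerator degree minus denominator degree plus number of radical factors'', so $q_\emptyset$, which you have shown equals $\tilde g$ and is a polynomial, automatically has degree at most~$d$); or complexify, fixing branches so that each $\mu_j$ is analytic on the upper half-plane union a neighborhood of~$J$ with $\abs{\mu_j(iy)}=O(y^d)$, and applying the identity theorem there to the entire function $\tilde g$ (already known to be a polynomial) and to $\sum_j\alpha_j\mu_j$. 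With either repair the argument goes through.
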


From this lemma, we obtain the following result.

\begin{corollary}\label{cor:arcsin-poly}
  Let~$h\colon\RR\to\CC$ be an analytic function, and $x_0\in\RR$.

  Suppose that in a neighborhood of $x_0$, $h$ is equal to an expectation value function with
  arcsine input encoding, i.e., there is an~$n$, a function~$f$ as in
  Remark~\ref{rem:fourier-expansion}, vectors $a,b\in\RR^n$, and an $\eps > 0$ such that
  \begin{enumerate}
  \item $-1\le x\cdot a_j+b_j\le+1$ for all $x\in\lt]x_0-\eps,x_0+\eps\rt[$, and
  \item $h(x) = f(\arcsin(x\cdot a+b))$ holds for all $x\in\lt]x_0-\eps,x_0+\eps\rt[$.
  \end{enumerate}
  Then
  $h$ is a polynomial, and
  the input redundancy, $n$, is greater than or equal to the degree of~$h$.
\end{corollary}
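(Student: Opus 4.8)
The plan is to recognize $h$, locally near $x_0$, as a linear combination of sc-monomials of degree at most $n$, and then to feed this into Lemma~\ref{lem:convrad-poly}, which supplies the analytic heavy lifting. Writing $N := \lt]x_0-\eps,x_0+\eps\rt[$, I would first invoke Remark~\ref{rem:fourier-expansion}\ref{rem:fourier-expansion:R} in the form already recorded in~\eqref{eq:arcsin:expval-scmono}: on $N$ one has $h = g$, where
\[
  g := \sum_{\substack{S,C\subseteq[n]\\ S\cap C=\emptyset}} \tilde f_{S,C}\, \mu^{(a,b)}_{S,C}
\]
is a linear combination of the sc-monomials of~\eqref{eq:def:sc-mono}. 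Since $S$ and $C$ are disjoint subsets of $[n]$, each $\mu^{(a,b)}_{S,C}$ has degree $\abs{S}+\abs{C}\le n$, so $g$ has exactly the shape demanded by Lemma~\ref{lem:convrad-poly} with $d := n$.

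Next I would verify the two hypotheses of that lemma. For the intersection condition, the first hypothesis of the corollary gives $a_jx+b_j\in[-1,+1]$ for every $x\in N$ and every $j$. Fix such a $j$. If $a_j\ne 0$, then $x\mapsto a_jx+b_j$ carries the open interval $N$ onto an open subinterval of $[-1,+1]$, which therefore lies in $\lt]-1,+1\rt[$; this is exactly the statement $N\subseteq \lt]\tfrac{-1-b_j}{a_j},\tfrac{+1-b_j}{a_j}\rt[$. If $a_j = 0$ and $\abs{b_j}<1$, the factor $\sqrt{1-(a_jx+b_j)^2}$ is a nonzero constant and imposes no constraint on $x$; if $a_j = 0$ and $\abs{b_j}=1$, that factor vanishes identically, so every term of $g$ with $j\in C$ is the zero function and may be dropped from the sum without changing $g$. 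After discarding these vanishing terms, the remaining sc-monomials $\mu_j$ all satisfy $N\subseteq I_{\mu_j}$, so in particular $\bigcap_j I_{\mu_j}\supseteq N\ne\emptyset$.

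For the continuation condition, recall that the expectation value function $f$ is real-valued, so $g$, and hence $h$, is real on $N$; since $h$ is analytic on all of $\RR$, its Taylor coefficients at any point of $N$ are real, whence $h$ is in fact real-valued on $\RR$. Being analytic on $\RR$ and agreeing with the analytic function $g$ on the open interval $N\subseteq\bigcap_j I_{\mu_j}$, the function $h$ is, by the identity theorem, precisely the analytic continuation of $g$ to $\RR$. Lemma~\ref{lem:convrad-poly} with $d=n$ then declares this continuation---namely $h$---to be a polynomial of degree at most $n$, which settles both claims simultaneously: $h$ is a polynomial, and $n\ge\deg h$.

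The genuinely delicate analytic work---bounding the radius of convergence and excluding non-polynomial continuations---is exactly what the cited Lemma~\ref{lem:convrad-poly} does for us, so I expect the only real care to lie in the bookkeeping of the middle paragraph: reconciling the closed constraint $-1\le a_jx+b_j\le+1$ with the open defining intervals $I_\mu$, and handling the degenerate cases $a_j=0$ or $\abs{b_j}=1$, in which sc-monomials collapse to constants or to the zero function and the naive degree count $\abs{S}+\abs{C}$ overstates the analytic degree (harmlessly, since we only need an upper bound).
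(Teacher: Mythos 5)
Your proposal is correct and follows essentially the same route as the paper's own proof: identify $h$ on the interval with a linear combination of sc-monomials of degree at most $n$ via~\eqref{eq:arcsin:expval-scmono}, observe that $h$ is the analytic continuation of that combination, and conclude by Lemma~\ref{lem:convrad-poly}. The only difference is that you verify the lemma's hypotheses (the nonemptiness of $\bigcap_j I_{\mu_j}$, the open-versus-closed interval issue, and the degenerate cases $a_j=0$) explicitly, where the paper simply asserts the lemma is applicable; this extra bookkeeping is sound and harmless.
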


\begin{center}
\fbox{%
  \begin{minipage}[h]{.75\linewidth}
    To represent a polynomial~$h$ by a MiNKiF PQC with arcsine input encoding in a tiny neighborhood
    of a given point~$x_0$, the input redundancy must be at least the degree of~$h$.
  \end{minipage}%
}%
\end{center}

\begin{proof}[Proof of Corollary~\ref{cor:arcsin-poly}]
  Let us abbreviate
  $g\colon x\mapsto f(\arcsin(x\cdot a+b))\colon \lt]x_0-\eps,x_0+\eps\rt[\to\RR$.  From the
  discussion above, we know that~$g$ is a linear combination of sc-monomials.

  Both functions~$h$ and~$g$ are analytic, and they coincide on an interval.  Hence, $g$ has an
  analytic continuation, $h$, to the real line so that Lemma~\ref{lem:convrad-poly} is applicable,
  and states that~$h$ is a polynomial with degree at most~$n$.  This completes the proof of
  Corollary~\ref{cor:arcsin-poly}.
\end{proof}

As indicated in the introduction, in the special case which is considered in
\cite{Mitarai-Negoro-Kitagawa-Fujii:q-circ-learn:2018} --- where the input amplitudes are stored
(by rotations) in~$n$ distinct qubits before any other quantum operation is performed --- this can
be proved by looking directly at the effect of a Pauli transfer matrix on the mixed state vector
in the Pauli basis.  Our corollary shows that this effect persists no matter how the
arcsine-encoded inputs are spread over the quantum circuit.

The corollary allows us to lower bound the input redundancy for some functions.

\textsc{Examples.} 
There is no PQC with arcsine input encoding that represents the function $x\mapsto \sin x$
(exactly) in a neighborhood any point.  Indeed, the same holds for any analytic function defined
on the real line which is not a polynomial: the exponential function, the sigmoid function, arcus
tangens, \dots

Unfortunately, from these impossibility results, no approximation error lower bounds can be
derived.  Indeed, in their paper~\cite{Mitarai-Negoro-Kitagawa-Fujii:q-circ-learn:2018} Mitarai et
al.\ point out that, due to the $\sqrt{\cdot}$ terms, the functions represented by the
expectation values can more easily represent a larger class of functions than polynomials.

To give lower bounds for the representation of functions which are not analytic on the whole real
line, we proceed as follows.
For fixed $n\ge 1$, $x_0\in\RR$ and $a,b\in\RR^n$, denote by $M^{n;a,b}_{x_0}$ the vector
space spanned by all functions of the form
$\lt]x_0-\eps,x_0+\eps\rt[ \to \RR \colon x \mapsto f(\arcsin(x\cdot a+b))$ for
an\footnote{Mathematically rigorously speaking, $M^{n;a,b}_{x_0}$ is the germ of functions
  at~$x_0$.}  $\eps>0$, where~$f$ ranges over all expectation value functions with input
redundancy~$n$, i.e., functions as in Remark~\ref{rem:fourier-expansion}, $a,b\in\RR^n$ satisfy
$-1<a_jx_0+b_j<+1$, and the $\arcsin$ is applied to each component of the vector.

\begin{proposition}\label{prop:vecdim:arcsin}
  The vector space $M^{n;a,b}_{x_0}$ has dimension at most~$3^n$, and is spanned by the
  sc-monomials~\eqref{eq:def:sc-mono} of degree~$n$.
\end{proposition}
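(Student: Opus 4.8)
The plan is to prove the asserted equality $M^{n;a,b}_{x_0}=\spn\{\mu^{(a,b)}_{S,C}\mid S,C\subseteq[n],\ S\cap C=\emptyset\}$ by establishing the two inclusions separately; the bound on the dimension then follows at once, since the indexing set of disjoint pairs $(S,C)$ has exactly $3^n$ elements (each $j\in[n]$ is assigned to $S$, to $C$, or to neither).

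The inclusion ``$\subseteq$'' is essentially a restatement of the computation already recorded in~\eqref{eq:arcsin:expval-scmono}. Every generator of $M^{n;a,b}_{x_0}$ is a function $x\mapsto f(\arcsin(x\cdot a+b))$ with $f$ an expectation value function of input redundancy~$n$; by Remark~\ref{rem:fourier-expansion}\ref{rem:fourier-expansion:R} such an $f$ is a multilinear polynomial in the sine and cosine functions, and substituting the arcsine encoding turns this, exactly as displayed in~\eqref{eq:arcsin:expval-scmono}, into a linear combination of the sc-monomials $\mu^{(a,b)}_{S,C}$. Hence each generator, and therefore all of $M^{n;a,b}_{x_0}$, lies in the span of the sc-monomials.

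For the reverse inclusion ``$\supseteq$'' I would realize each sc-monomial individually. Fix disjoint $S_0,C_0\subseteq[n]$ and let $f$ be the expectation value function whose Fourier expansion~\eqref{eq:f-FouExp-R} has the single nonzero coefficient $\tilde f_{\tau(S_0,C_0)}=1$. By the equivalence in Remark~\ref{rem:fourier-expansion} this $f$ is a legitimate expectation value function of input redundancy~$n$ (it is a function on $\RR^n$, merely constant in the variables outside $S_0\cup C_0$). Specializing~\eqref{eq:arcsin:expval-scmono} to this $f$ collapses the sum to the single term $\mu^{(a,b)}_{S_0,C_0}$, so $\mu^{(a,b)}_{S_0,C_0}\in M^{n;a,b}_{x_0}$. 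As $(S_0,C_0)$ was arbitrary, the entire span is contained in $M^{n;a,b}_{x_0}$, which completes the equality. The dimension may be strictly smaller than $3^n$ when the sc-monomials are linearly dependent---for instance when $(a_j,b_j)=\pm(a_{j'},b_{j'})$ for some $j\ne j'$, as discussed after~\eqref{eq:def:sc-mono-Int}---which is why the bound is stated as ``at most''.

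The main obstacle is the reverse inclusion, and specifically the assertion that a \emph{single} sc-monomial arises as the arcsine-encoding image of a genuine expectation value function with input redundancy~$n$. This rests on the realizability direction of Remark~\ref{rem:fourier-expansion}---that every multilinear trigonometric polynomial with spectrum in $\ZZ_3^n$ is produced by some MiNKiF circuit---which is the one implication deferred to~\cite{GilVidal:PhD:2020}. Granting that implication, the remainder is routine: the forward inclusion merely re-reads~\eqref{eq:arcsin:expval-scmono}, and the dimension bound is a count of disjoint set-pairs.
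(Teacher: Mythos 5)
Your proposal is correct, and its core coincides with the paper's own (very short) proof: the forward inclusion is exactly the paper's argument, namely that \eqref{eq:arcsin:expval-scmono} exhibits every generator of $M^{n;a,b}_{x_0}$ as a linear combination of the sc-monomials $\mu^{(a,b)}_{S,C}$, and the dimension bound is then the count of disjoint pairs $(S,C)$ with $S,C\subseteq[n]$, of which there are exactly $3^n$. Where you go beyond the paper is the reverse inclusion: the paper's proof stops after the containment $M^{n;a,b}_{x_0}\subseteq\spn\{\mu^{(a,b)}_{S,C}\}$, which is all that the dimension bound needs, and it does not verify that each individual sc-monomial actually lies in $M^{n;a,b}_{x_0}$ even though the wording ``is spanned by'' suggests equality of the two spaces. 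Your realization of a single $\mu^{(a,b)}_{S_0,C_0}$ via the expectation value function with one nonzero coefficient $\tilde f_{\tau(S_0,C_0)}$ is the right way to close that gap, and you correctly identify that it leans on the realizability direction of Remark~\ref{rem:fourier-expansion} (that every multilinear trigonometric polynomial with spectrum in $\ZZ_3^n$ is the expectation value function of some MiNKiF circuit), which the paper itself only defers to~\cite{GilVidal:PhD:2020}. So your argument is, if anything, more complete than the one in the paper; both yield the stated bound, and your caveat that the dimension can drop below $3^n$ when sc-monomials coincide or become dependent is consistent with the discussion following~\eqref{eq:def:sc-mono-Int}.
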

\begin{proof}
  With $a,b$ fixed, there are at most $3^n$ sc-monomials~\eqref{eq:def:sc-mono} of degree~$n$, as
  $S,C\subseteq [n]$ and $S\cap C = \emptyset$ hold.  Hence, the statement about the dimension
  follows from the fact that the elements of $M^{n;a,b}_{x_0}$ are generated by sc-monomials.

  The fact that the sc-monomials generate the expectation value functions with arcsine
  input-encoding of redundancy~$n$ is just the statement of~\eqref{eq:arcsin:expval-scmono} above.
  This concludes the proof of Proposition~\ref{prop:vecdim:arcsin}.
\end{proof}

We can now proceed in analogy to the case of linear input encoding.  Let us define the
\textit{sc-rank} at~$x_0$ of a function~$h$ defined in a neighborhood of~$x_0$ as the infimum over
all~$r$ for which there exist sc-monomials $\mu_1,\dots,\mu_r$, an $\eps>0$, and real numbers
$\alpha_1,\dots,\alpha_r$ such that $x_0\in\bigcap_j I_{\mu_u}$, and
\begin{equation*}
  h(x) = \sum_{j=1}^r \alpha_j \mu_j(x)  \quad\text{for all $x\in\lt]x_0-\eps,x_0+\eps\rt[$.}
\end{equation*}

Proposition~\ref{prop:vecdim:arcsin} now directly implies the following result.

\begin{corollary}\label{cor:main-arcsine}
  Let~$h$ be a real-valued function defined in some neighborhood of a point $x_0\in\RR$.

  Suppose that in a neighborhood of $x_0$, $h$ is equal to an expectation value function with
  arcsine input encoding, i.e., there is an~$n$, a function~$f$ as in
  Remark~\ref{rem:fourier-expansion}, vectors $a,b\in\RR^n$, and an $\eps > 0$ such that
  $h(x) = f(\arcsin(x\cdot a+b)/(2\pi))$ holds for all $x\in\lt]x_0-\eps,x_0+\eps\rt[$.

  The input redundancy, $n$, is greater than or equal to $\log_3(r)$, where~$r$ is the sc-rank
  of~$h$ at~$x_0$.
\end{corollary}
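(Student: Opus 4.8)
The plan is to read this off from Proposition~\ref{prop:vecdim:arcsin} essentially for free: the hypothesis places the germ of $h$ at $x_0$ inside the space $M^{n;a,b}_{x_0}$, and a bound on the sc-rank is then just a statement about the size of a spanning set of that space.

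First I would fix the data $n, f, a, b, \eps$ furnished by the hypothesis and record that the composite $x \mapsto f(\arcsin(x\cdot a + b)/(2\pi))$ is, by the computation that produced~\eqref{eq:arcsin:expval-scmono}, a linear combination of the sc-monomials $\mu^{(a,b)}_{S,C}$ with $S, C \subseteq [n]$ disjoint. In other words, the germ of $h$ at $x_0$ is an element of $M^{n;a,b}_{x_0}$. This is where the assumption $-1 < a_j x_0 + b_j < +1$ does its work: it guarantees $x_0 \in I_\mu$ for each such sc-monomial, so the representation is valid on a genuine neighborhood of $x_0$ and the sc-monomials are admissible in the sense of the sc-rank definition.

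Next I would invoke Proposition~\ref{prop:vecdim:arcsin}, which says that $M^{n;a,b}_{x_0}$ is spanned by these sc-monomials and that there are at most $3^n$ of them (three choices --- in $S$, in $C$, or in neither --- for each of the $n$ indices). Consequently $h$ admits a representation near $x_0$ as a linear combination of at most $3^n$ sc-monomials whose common domain interval contains $x_0$. By the very definition of the sc-rank $r$ of $h$ at $x_0$ as the infimum of the number of sc-monomials in such a representation, this forces $r \le 3^n$. Taking $\log_3$ gives $n \ge \log_3 r$, which is the claim.

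There is no real obstacle here; the substance sits in Proposition~\ref{prop:vecdim:arcsin} and in the sc-monomial expansion~\eqref{eq:arcsin:expval-scmono}. The only point that deserves a moment's attention is the bookkeeping just mentioned --- checking that the $3^n$ spanning sc-monomials are simultaneously defined at $x_0$, so that they legitimately witness an upper bound on the sc-rank rather than merely spanning an abstract germ space. This is exactly what the interval hypothesis $-1 < a_j x_0 + b_j < +1$ secures.
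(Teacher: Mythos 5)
Your argument is exactly the paper's: the paper gives no separate proof, stating only that Proposition~\ref{prop:vecdim:arcsin} ``directly implies'' the corollary, and the implication you spell out --- the germ of $h$ lies in $M^{n;a,b}_{x_0}$, which is spanned by at most $3^n$ sc-monomials admissible at $x_0$, so $r\le 3^n$ and $n\ge\log_3 r$ --- is precisely that intended one-liner. Your extra care about $x_0\in I_\mu$ is a correct and welcome piece of bookkeeping that the paper leaves implicit.
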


\begin{center}
\fbox{%
  \begin{minipage}[h]{.75\linewidth}
    To represent a function~$h$ by a MiNKiF PQC with arcsine input encoding in a tiny neighborhood
    of a given point~$x_0$, the input redundancy must be at least the logarithm of the sc-rank
    of~$h$ at~$x_0$.
  \end{minipage}%
}%
\end{center}

We conclude the section with a note on the choice of the parameters $a,b$.

\begin{remark}\label{rem:Var-Inp-Enc:arcsin}
  It can be seen~\cite{GilVidal:PhD:2020} that the dimension of the space $M^n_{x_0}$ is~$3^n$, if
  $a_j,b_j$ $j=1,\dots,n$ are chosen in general position, but only $O(n)$ if $a$ is a constant
  multiple of the all-ones vector.  Moreover, as indicated in
  Proposition~\ref{prop:vecdim:arcsin}, the basis elements which span the space depend on $a,b$,
  and hence the space $M^{n;a,b}_{x_0}$ will in general be different for different choices of
  $a,b$.
  Again, we find that it is plausible that the choices of $a,b$ should depend on the target
  function.
\end{remark}

\section{Conclusions and outlook}\label{sec:outlook}
To the best of our knowledge, our results give the first rigorous theoretical quantitative
justification of a routine decision for the design of parameterized quantum circuit architectures:
Input redundancy \textsl{must} be present if good approximations of functions are the goal.

Both activation functions we have considered give clear evidence that input redundancy is
necessary, and grows at least logarithmically with the ``complexity'' of the function: The
complexity of a function~$f$ with respect to a family $\mathcal B$ of ``basis functions'' is the number of
functions from the family which are needed to obtain~$f$ as a linear combination.  In our results,
the function family~$\mathcal B$ depends on the activation function.  In the case of linear input
encoding (activation function ``identity''), the basis functions are trigonometric functions
$t\mapsto e^{2\pi i kt}$, whereas for the $\arcsin$ activation function, we obtain the basis
monomials~\eqref{eq:def:sc-mono} already used, in a weaker form,
in~\cite{Mitarai-Negoro-Kitagawa-Fujii:q-circ-learn:2018}.

From Remarks \ref{rem:Var-Inp-Enc:lin} and~\ref{rem:Var-Inp-Enc:arcsin} we see that that the
weights $a,b$, i.e., the coefficients in the affine transformation links in Fig.~\ref{fig:qnn},
should have to be variable in order to ensure a reasonable amount of expressiveness in the
function represented by the quantum circuit.  We use the term \textit{variational input encoding}
to refer to the concept of training the parameters involved in the encoding with other model
parameters.  A recent set of limited experiments~\cite{AndrewLei:MSc:2020} indicate that
variational input encoding improves the accuracy of Quantum Neural Networks in classification
tasks.

\par\noindent
While we emphasize the point that this paper demonstrates a concept --- lower bounds for input
redundancy can be proven --- there are a few obvious avenues to improve our results.

Most importantly, our proofs rely on exactly representing a target function.  This is an
unrealistic scenario.  The most pressing task is thus to give lower bounds on the input
redundancy when an approximation of the target function with a desired accuracy $\eps > 0$ in a
suitable norm is sufficient.

Secondly, we thank an anonymous reviewer for pointing out to us that lower bounds for many more
activation functions could be proved.

Finally, Remark~\ref{rem:fourier-expansion} mentions that for every multi-linear trigonometric
polynomial~$f$, there is a PQC whose expectation value function is precisely~$f$.  It would be
interesting to lower-bound a suitable quantum-complexity measure of the PQCs representing a
function, e.g., circuit depth.  While comparisons of the quantum vs classical complexity of
estimating expectation values have attracted some attention~\cite{bravyi2019classical}, to our
knowledge, the same question in the ``parameterized setting'' has not been considered.

\subsection*{Conflict of Interest Statement}
Author Dirk Oliver Theis was employed by the company Ketita Labs O\"U.  The remaining author
declares that the research was conducted in the absence of any commercial or financial
relationships that could be construed as a potential conflict of interest.

\subsection*{Author Contributions}
DOT contributed the realization that lower bounds could be obtained, and sketches of the proofs.
FJGV contributed the details of the proofs, and the literature overview.

\subsection*{Funding}
This research was supported by the Estonian Research Council, ETAG (\textit{Eesti
  Teadusagentuur}), through PUT Exploratory Grant \#620.  DOT is partly supported by the Estonian
Centre of Excellence in IT (EXCITE), funded by the European Regional Development Fund.

\bibliographystyle{plain}
\bibliography{Javier}
\appendix
\newsavebox{\sldfweifj}
\savebox{\sldfweifj}{\ref{lem:chindependence}}
\section{Proof of Lemma~\usebox{\sldfweifj}}\label{apx:proofs:lem:chindependence}
We have to prove that the functions $\chi_k$, $k\in \RR$, defined in~\eqref{eq:def:chi_k} are
linearly independent (in the algebraic sense, i.e., considering finite subsets of the functions at
a time).  There are several ways of proving this well-known fact; we give the proof that probably
makes most sense to a physics readership: The Fourier transform (in the sense of tempered
distributions) of the function $\chi_k$ is $\delta(k-*)$, the Dirac distribution centered on~$k$.
These generalized functions are clearly linearly independent for different values of~$k$.

\end{document}